\newcommand{\bone}{\mathbf{1}}
\newcommand{\cA}{\mathcal{A}}
\newcommand{\C}{\mathbb{C}}
\newcommand{\cC}{\mathcal{C}}
\newcommand{\E}{\mathbb{E}}
\newcommand{\cG}{\mathcal{G}}
\newcommand{\cS}{\mathcal{S}}
\newcommand{\whcS}{\widehat{\cS}}
\newcommand{\teta}{\tilde{\eta}}
\newcommand{\tH}{\mathrm{H}}
\newcommand{\tT}{\mathrm{T}}
\newcommand{\SNR}{\textsf{\textsc{snr}}}
\newcommand{\MAR}{\textsf{\textsc{mar}}}
\newcommand{\CN}{\textsf{CN}}
\theoremstyle{plain}
\newtheorem{lemma}{Lemma}
\newtheorem{theorem}{Theorem}
\theoremstyle{definition}
\newtheorem{definition}{Definition}
\theoremstyle{remark}
\newtheorem{remark}{Remark}
\begin{document}

\title{Model Selection: Two Fundamental Measures of Coherence and Their Algorithmic Significance}

\author{\IEEEauthorblockN{Waheed U. Bajwa, Robert Calderbank, and Sina Jafarpour}
\IEEEauthorblockA{Princeton University, Princeton, NJ 08544}
}

\maketitle

\begin{abstract}
The problem of model selection arises in a number of contexts, such as compressed sensing, subset selection in linear regression, estimation of structures in graphical models, and signal denoising. This paper generalizes the notion of \emph{incoherence} in the existing literature on model selection and introduces two fundamental measures of coherence---termed as the worst-case coherence and the average coherence---among the columns of a design matrix. In particular, it utilizes these two measures of coherence to provide an in-depth analysis of a simple one-step thresholding (OST) algorithm for model selection. One of the key insights offered by the ensuing analysis is that OST is feasible for model selection as long as the design matrix obeys an easily verifiable property. In addition, the paper also characterizes the model-selection performance of OST in terms of the worst-case coherence, $\mu$, and establishes that OST performs near-optimally in the low signal-to-noise ratio regime for $N \times \cC$ design matrices with $\mu \approx O(N^{-1/2})$. Finally, in contrast to some of the existing literature on model selection, the analysis in the paper is nonasymptotic in nature, it does not require knowledge of the true model order, it is applicable to generic (random or deterministic) design matrices, and it neither requires submatrices of the design matrix to have full rank, nor does it assume a statistical prior on the values of the nonzero entries of the data vector.
\end{abstract}


\section{Introduction}
In information processing problems involving high-dimensional data, the ``curse of dimensionality'' can often be broken by exploiting the fact that real-world data tend to live in low-dimensional manifolds. This phenomenon is exemplified by the important special case in which a data vector $\alpha \in \C^\cC$ satisfies $\|\alpha\|_0 \doteq \sum_{i=1}^\cC 1_{\{|\alpha_i| > 0\}} \leq k \ll \cC$ and is observed according to the linear measurement model $f = \Phi\alpha + \eta$. Here, $\Phi$ is an $N \times \cC$ (real- or complex-valued) matrix called the \emph{measurement} or \emph{design matrix}, while $\eta \in \C^N$ represents noise in the measurement system. In this problem, the fact that $\alpha$ is ``$k$-sparse'' allows one to operate in the so-called ``compressed'' setting, $k < N \ll \cC$, thereby enabling tasks that might be deemed prohibitive otherwise.

Fundamentally, given a measurement vector $f = \Phi\alpha + \eta$ in the compressed setting, there are two complementary---but nonetheless distinct---questions that one needs to answer:
\begin{itemize}
 \item[] \hspace{-1em}\textbf{[Estimation]} Under what conditions can a $k$-sparse $\alpha$ be reliably and efficiently reconstructed from $f$?
 \item[] \hspace{-1em}\textbf{[Model Selection]} Under what conditions can the locations of the nonzero entries of a $k$-sparse $\alpha$ be reliably and efficiently recovered from $f$?
\end{itemize}
A number of researchers have successfully addressed the estimation question over the past few years under the rubric of \emph{compressed sensing}. In many application areas, however, the model-selection question is equally---if not more---important than the estimation question. In particular, the problem of model selection (sometimes also known as \emph{variable selection} or \emph{sparsity pattern recovery}) arises indirectly in a number of contexts, such as subset selection in linear regression \cite{miller:90}, estimation of structures in graphical models \cite{meinshausen:annstat06}, and signal denoising \cite{donoho:siamjsc98}. In addition, solving the model-selection problem sometimes also enables one to solve the estimation problem.%

In this paper, we study the problem of \emph{polynomial-time model selection in a compressed setting for the case when the true model order $k$ is unknown}. Despite being well-motivated by applications, this problem has received less attention compared to its estimation counterpart in the compressed sensing literature; the most notable exceptions here being \cite{meinshausen:annstat06, zhao:jmlr06, wainwright:tit09, candes:annstat09, fletcher:tit09, reeves:asilomar09}. In particular, the results reported in \cite{meinshausen:annstat06,zhao:jmlr06} establish that the lasso \cite{tibshirani:jrss96} asymptotically identifies the correct model under certain conditions on the design matrix $\Phi$ and the sparse vector $\alpha$. Later, Wainwright in \cite{wainwright:tit09} strengthens the results of \cite{meinshausen:annstat06,zhao:jmlr06} and makes explicit the dependence of model selection using the lasso on the smallest (in magnitude) nonzero entry of $\alpha$. However, apart from the fact that the results reported in \cite{meinshausen:annstat06,zhao:jmlr06,wainwright:tit09} are asymptotic in nature, the main limitation of these works is that explicit verification of the conditions that $\Phi$ needs to satisfy is computationally intractable for $k \gtrsim \sqrt{N}$.

The most general (and nonasymptotic) results for model selection using the lasso have been reported in \cite{candes:annstat09}. Specifically, Cand\`{e}s and Plan establish in \cite{candes:annstat09} that the lasso correctly identifies \emph{most} models with probability $1 - O(\cC^{-1})$ under certain conditions on the smallest nonzero entry of $\alpha$ provided: (i) the spectral norm (the largest singular value) and the worst-case coherence (the maximum absolute innerproduct between the columns) of $\Phi$ are not too large, and (ii) the values of the nonzero entries of $\alpha$ are statistically independent (and statistically symmetric around zero). The main limitation of this work is that statistical independence among the nonzero entries of $\alpha$ can be difficult to ensure in many applications.%

Finally, as opposed to the approach taken in \cite{meinshausen:annstat06, zhao:jmlr06, wainwright:tit09, candes:annstat09}, the focus in \cite{fletcher:tit09,reeves:asilomar09} is on model selection using a simple thresholding algorithm. In particular, it is shown in both \cite{fletcher:tit09,reeves:asilomar09} that model selection using thresholding is asymptotically optimal in the low signal-to-noise ratio (\SNR) regime. However, one of the main limitations of these works is that the reported results are mainly asymptotic in nature and rely on having some knowledge of the true model order. In addition, the analysis carried out in \cite{fletcher:tit09} is for the specific case of an independent and identically distributed (i.i.d.) Gaussian design matrix, while the analysis carried out in \cite{reeves:asilomar09} is for the specific case of $\alpha$ with i.i.d. Gaussian nonzero entries.

\subsection{Our Contributions}
We begin by assuming that the design matrix $\Phi$ has unit $\ell_2$-norm columns and introducing two fundamental measures of coherence among the columns $\{\varphi_i \in \C^N\}$ of $\Phi$:
\begin{itemize}
\item \textbf{Worst-Case Coherence}: $\mu \doteq \max\limits_{i,j:i \neq j} \big|\langle\varphi_i, \varphi_j\rangle\big|$, and 
\item \textbf{Average Coherence}: $\nu \doteq \frac{1}{\mathcal{C}-1} \max\limits_{i} \bigg|\!\sum\limits_{j:j \neq i} \langle\varphi_i, \varphi_j\rangle\bigg|.$
\end{itemize}
In words, worst-case coherence is a similarity measure between the columns of a design matrix and average coherence is a measure of the spread of the columns of a design matrix within the $N$-dimensional unit ball. Our main objective in this paper is to make use of these two measures of coherence in order to analyze the \emph{one-step thresholding} (OST) algorithm (see Algorithm~\ref{alg:OSTA}) for model selection. Algorithmically, this makes our approach to model selection somewhat similar to the one studied by Fletcher, Rangan, and Goyal \cite{fletcher:tit09} and Reeves and Gastpar \cite{reeves:asilomar09}. Analytically, however, the results reported in this paper are more general in nature than the ones in \cite{fletcher:tit09, reeves:asilomar09}; in particular, the asymptotic results of \cite{fletcher:tit09, reeves:asilomar09} for thresholding can be obtained as a special case of Theorem~\ref{thm:measurements} in Section~\ref{sec:mainres}.

More specifically, Theorem~\ref{thm:measurements} holds for any (random or deterministic) design matrix with sufficiently small values of the worst-case and average coherence, and the stated result in that case is completely nonasymptotic in nature. Equally importantly, unlike the case of \cite{fletcher:tit09,reeves:asilomar09}, the threshold value in Theorem~\ref{thm:measurements} is completely independent of the model order $k$ and relies only on the knowledge of $\mu, \cC$, and $\SNR$. In addition, Theorem~\ref{thm:measurements} can also be combined with the necessary conditions for asymptotically consistent model selection reported in \cite{fletcher:tit09,wainwright:tit08sub} to conclude that model selection using the OST is asymptotically optimal in the low $\SNR$ regime for any design matrix that has $\mu \approx O(N^{-1/2})$ and $\nu \approx O(N^{-1})$.

Finally, in order to compare the results obtained in this paper for model selection using the OST with the nonasymptotic results reported in \cite[Theorem~1.3]{candes:annstat09} for the lasso, Theorem~\ref{thm:sigclass} rederives Theorem~\ref{thm:measurements} in terms of conditions on the model order $k$ and the smallest nonzero entry of $\alpha$. In particular, it can be easily concluded from Theorem~\ref{thm:sigclass} \mbox{and \cite[Theorem~1.3]{candes:annstat09}} that the OST---despite being computationally primitive---performs as well as the lasso for model selection in the low $\SNR$ regime provided the design matrix has $\mu \approx O(N^{-1/2})$ and $\nu \approx O(N^{-1})$. In addition, unlike the assumptions made in \cite{candes:annstat09}, the OST achieves this without requiring that most $N \times k$ submatrices of $\Phi$ be well-conditioned and the nonzero entries of $\alpha$ be statistically independent.

\section{Main Result}\label{sec:mainres}
\subsection{Problem Setup}
Before proceeding with presenting the main result of this paper, we need to be precise about our problem formulation. To this end, we begin by reconsidering the measurement model $f = \Phi\alpha + \eta$ in the compressed setting ($k < N \ll \cC$) and take the noise vector $\eta$ to be distributed as $\CN(0,\sigma^2 I)$, although the results can be readily generalized for other noise distributions. We also assume without loss of generality that $\Phi$ has unit $\ell_2$-norm columns and $\|\alpha\|_2 = 1$, since any scaling of $\Phi$ and $\alpha$ can be accounted for in the scaling of $\sigma$. In addition, we do not impose any prior distribution on the design matrix $\Phi$ and the nonzero entries of $\alpha$. Finally, we use the notation $supp(\alpha)$ for the set containing the locations of the nonzero entries of $\alpha$ and assume, similar to the case of \cite{candes:annstat09,reeves:asilomar09}, that $supp(\alpha)$ is a uniformly random $k$-subset of $\{1,\dots,\cC\}$. In other words, we have a uniform prior on the model $supp(\alpha)$.

\subsection{The Coherence Property and Its Implications}
\begin{algorithm}[t]
\caption{The One-Step Thresholding (OST) Algorithm for Model Selection}
\label{alg:OSTA}
\textbf{Input:} An $N \times \cC$ (real- or complex-valued) matrix $\Phi$, a vector $f \in \C^N$, and a thresholding parameter $\lambda > 0$.\\
\textbf{Output:} Compute $y \doteq \Phi^\tH f$ and return an estimate of the model $\whcS \doteq \left\{i \in \{1,\dots,\cC\} : |y_i| > \lambda\right\}$.
\end{algorithm}
It is often realized in the literature that successful model selection requires the columns of the design matrix to be \emph{incoherent}; see, e.g., \cite{meinshausen:annstat06,zhao:jmlr06,candes:annstat09}. Below, we mathematically formalize this notion in terms of the coherence property.
\begin{definition}[The Coherence Property]
A matrix $\Phi$ is said to obey the coherence property if the following conditions hold:
\begin{align}
\nonumber
&(\textbf{CP-1}) \quad 
\mu \leq \frac{1}{\sqrt{10\log{\cC}}}\,, \quad \text{and} \quad
&(\textbf{CP-2}) \quad
\nu \leq \frac{12\mu}{\sqrt{N}}\,.
\end{align}
\end{definition}
\noindent Notice that the coherence property can be easily verified in polynomial time since it only requires checking that $\|\Phi^\tH\Phi - I\|_{\max} \leq (10\log{\cC})^{-1/2}$ and $\|(\Phi^\tH\Phi - I)\bone\|_\infty \leq 12(\cC-1)N^{-1/2}\|\Phi^\tH\Phi - I\|_{\max}$.

Before proceeding with describing the implications of the coherence property, it is instructive to first define three fundamental quantities as follows:
\begin{align*}
\alpha_{\min} \doteq \min_{i : |\alpha_i| > 0} |\alpha_i|, \ \SNR_{\min} \doteq \frac{\alpha_{\min}^2}{\E[\|\eta\|_2^2]/k},\ \MAR \doteq \frac{\alpha_{\min}^2}{1/k}.
\end{align*}
In words, $\alpha_{\min}$ is the magnitude of the smallest nonzero entry of $\alpha$, $\SNR_{\min}$ is the ratio of the energy in the smallest nonzero entry of $\alpha$ and the average noise energy per nonzero entry, and $\MAR$---which is termed as the \emph{minimum-to-average ratio} \cite{fletcher:tit09}---is the ratio of the energy in the smallest nonzero entry of $\alpha$ and the average signal energy per nonzero entry. We are now ready to state the main result of this paper.
\begin{theorem}\label{thm:measurements}
Suppose that $\Phi$ obeys the coherence property and write its worst-case coherence as $\mu = c_1 N^{-1/\beta}$ for some \mbox{$c_1 > 0$} (which may depend on $\log{\cC}$) and $\beta \in (1,\infty]$. Next, choose the threshold $\lambda = 4\,\max\big\{12\mu\sqrt{2\log{\cC}},\sqrt{\sigma^2\log \cC}\,\big\}$. Then the OST satisfies $\Pr(\whcS \not= supp(\alpha)) \leq 9\cC^{-1}$ as long as the number of measurements
\begin{align*}
N > \max\left\{2k\log{\cC}, \frac{64}{\SNR_{\min}} k\log{\cC}, \left(\frac{2c_2}{\MAR} k\log{\cC}\right)^{\beta/2}\right\}.
\end{align*}
Here, the quantity $c_2 > 0$ is defined as $c_2 = (96\,c_1)^2$.
\end{theorem}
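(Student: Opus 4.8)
The plan is to dissect the OST statistic $\by \doteq \Phi^\tH f = \Phi^\tH\Phi\,\alpha + \Phi^\tH\eta$ coordinate by coordinate. Writing $\cS \doteq supp(\alpha)$ and letting $\varphi_i$ be the $i$-th column of $\Phi$, one has $y_i = \alpha_i\,1_{\{i\in\cS\}} + z_i + w_i$ with \emph{interference} $z_i \doteq \sum_{j\in\cS\setminus\{i\}}\alpha_j\langle\varphi_i,\varphi_j\rangle$ and \emph{noise} $w_i \doteq \langle\varphi_i,\eta\rangle$. Since $\whcS = \cS$ holds whenever $\min_{i\in\cS}|y_i| > \lambda$ and $\max_{i\notin\cS}|y_i| \le \lambda$, the triangle inequality reduces the theorem to producing a high-probability event on which $\max_i|w_i| + \max_i|z_i| \le \lambda$, together with the deterministic fact $\alpha_{\min} > 2\lambda$. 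The latter is pure bookkeeping: substituting $\mu = c_1 N^{-1/\beta}$ and $c_2 = (96 c_1)^2$ into the three lower bounds on $N$ and unwinding the definitions of $\SNR_{\min}$ and $\MAR$, the $\SNR_{\min}$-condition gives $\alpha_{\min} > 8\sqrt{\sigma^2\log\cC}$ and the $\MAR$-condition gives $\alpha_{\min} > 96\,\mu\sqrt{2\log\cC}$, so $\alpha_{\min} > 2\lambda$ by the very choice of $\lambda$. It remains to control $\max_i|w_i|$ and $\max_i|z_i|$.

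Controlling the noise is routine: each $w_i$ is a zero-mean complex Gaussian of variance $\sigma^2\|\varphi_i\|_2^2 = \sigma^2$, so a Gaussian tail estimate together with a union bound over the $\cC$ coordinates gives $\max_i|w_i| \le \sqrt{2\sigma^2\log\cC}$ with probability at least $1-\cC^{-1}$, and this quantity is at most $\tfrac{\sqrt2}{4}\lambda$ by the second term in the maximum defining $\lambda$. The entire difficulty therefore lies in showing that $\max_i|z_i| \le 12\,\mu\sqrt{2\log\cC}$ with probability $1 - O(\cC^{-1})$, and this is where both halves of the coherence property, together with the randomness of the model $\cS$, come into play.

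For the interference, fix an index $i$ and condition on whether $i\in\cS$; in either case $z_i$ is a weighted sum of the inner products $\langle\varphi_i,\varphi_j\rangle$ taken over a uniformly random subset of $\{1,\dots,\cC\}\setminus\{i\}$. Its conditional mean is a scalar multiple of $\tfrac{1}{\cC-1}\sum_{j\ne i}\langle\varphi_i,\varphi_j\rangle$, whose modulus is at most $\nu$ by definition of the average coherence; combining this with $\|\alpha\|_1 \le \sqrt k\,\|\alpha\|_2 = \sqrt k$, with \textbf{(CP-2)}, and with $N > 2k\log\cC$, the conditional mean is at most $\sqrt k\,\nu \le 12\mu\sqrt{k/N} < 12\mu/\sqrt{2\log\cC}$, negligible against the target. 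For the fluctuation about this mean I would reveal the elements of the random subset one at a time and invoke a Bernstein/Hoeffding-type martingale inequality; the decisive observation is that, because $\|\alpha\|_2 = 1$ and $\max_{i\ne j}|\langle\varphi_i,\varphi_j\rangle| = \mu$, the relevant variance proxy is $\sum_{j\in\cS}|\alpha_j|^2|\langle\varphi_i,\varphi_j\rangle|^2 \le \mu^2$ rather than the naive $O(k\mu^2)$, so the fluctuation is on the order of $\mu\sqrt{\log\cC}$, with a tail that survives a union bound over the $\cC$ coordinates once \textbf{(CP-1)} is used to keep $\mu\sqrt{\log\cC}$ bounded. (To make the sub-Gaussian regime of this inequality actually extend out to the scale $\mu\sqrt{\log\cC}$, one separates off the $O(\log\cC)$ entries of $\alpha$ exceeding a $\Theta(1/\sqrt{\log\cC})$ threshold and handles them by the deterministic estimate $|z_i| \le \mu\,\|\alpha^{(1)}\|_1 = O(\mu\sqrt{\log\cC})$, applying the concentration bound only to the bulk.) A union bound over $i$ and over the conditioning then yields $\max_i|z_i| \le 12\mu\sqrt{2\log\cC}$.

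On the intersection of the noise and interference events---whose complement has probability at most $9\cC^{-1}$ after all the union bounds are tallied---we have $\max_i|z_i| + \max_i|w_i| \le 12\mu\sqrt{2\log\cC} + \sqrt{2\sigma^2\log\cC} \le \lambda$, hence $|y_i| \le \lambda$ for every $i\notin\cS$, while $|y_i| \ge \alpha_{\min} - \lambda > \lambda$ for every $i\in\cS$; thus $\whcS = supp(\alpha)$. The one genuinely delicate ingredient is the interference bound: the proof must exploit the cancellation in $z_i$ so that its size is governed by $\mu$ (through the $\|\alpha\|_2 = 1$ normalization) and not by $\mu\sqrt k$, control its mean via the \emph{average} coherence $\nu$ rather than the worst-case coherence $\mu$ (which is precisely what lets the conclusion survive for model orders $k$ much larger than $\log\cC$), and deal with the $i\in\cS$ case where the diagonal term $\alpha_i$ has been stripped from the sum; the noise term and the deterministic reduction are, by comparison, mechanical.
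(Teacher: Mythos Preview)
Your proposal is correct and follows essentially the same route as the paper: decompose $y$ into signal, interference, and noise; bound the noise by a Gaussian tail estimate; bound the interference mean via the average coherence $\nu$; and bound the interference fluctuation by revealing the random support one coordinate at a time and applying a bounded-difference martingale inequality (the paper packages this last step into an abstraction it calls the $(k,\epsilon,\delta)$-\emph{Statistical Orthogonality Condition} and proves it separately for $i\in\cS$ and $i\notin\cS$, but the content is identical to what you outline). One small remark: your large-/small-entry splitting of $\alpha$ is unnecessary---the paper applies Azuma's inequality directly, and because the squared bounded differences already sum to $O(\mu^2\|\alpha\|_2^2)=O(\mu^2)$ (this is precisely your ``decisive observation''), the resulting tail is purely sub-Gaussian at the scale $\mu\sqrt{\log\cC}$ with no Bernstein-type correction term to manage.
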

\begin{remark}
The constants in the second and third terms in the $\max$ expression can be significantly reduced if one is only interested in showing the model-selection consistency of OST; that is, $\lim_{\cC \rightarrow \infty}\Pr(\whcS \not= supp(\alpha)) = 0$. One should be particularly vigilant of this fact while comparing these results to the asymptotic ones reported in \cite{fletcher:tit09} for thresholding.
\end{remark}

Note that there are two fundamental but complementary approaches that can be taken while analyzing an algorithm for model selection, namely, the \emph{minimum measurement resources} approach and the \emph{permissible signal class} approach. The statement of Theorem~\ref{thm:measurements} helps us analyze the OST for model selection using the former approach and is best suited for comparing our results with those in \cite{wainwright:tit09,fletcher:tit09,wainwright:tit08sub}. On the other hand, Cand\`{e}s and Plan in \cite{candes:annstat09} take the latter approach while analyzing the lasso for model selection and the following result is best suited for comparison purposes in this regard.
\begin{theorem}\label{thm:sigclass}
Suppose that $\Phi$ obeys the coherence property and choose the threshold $\lambda = 4\,\max\big\{12\mu\sqrt{2\log{\cC}},\sqrt{\sigma^2\log \cC}\,\big\}$. Then, as long as $k \leq N/(2\log{\cC})$ and
\begin{align*}
\alpha_{\min} > \max\left\{8\sqrt{\sigma^2\log{\cC}}, 96\mu\sqrt{2\log{\cC}}\right\}
\end{align*}
the OST satisfies $\Pr(\whcS \not= supp(\alpha)) \leq 9\cC^{-1}$.
\end{theorem}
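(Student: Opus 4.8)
The plan is to derive Theorem~\ref{thm:sigclass} directly from Theorem~\ref{thm:measurements} by rewriting the three-term $\max$ condition on $N$ as conditions on $k$ and $\alpha_{\min}$. First I would recall the definitions $\SNR_{\min} = \alpha_{\min}^2/(\E[\|\eta\|_2^2]/k)$ and $\MAR = \alpha_{\min}^2/(1/k) = k\,\alpha_{\min}^2$, together with $\E[\|\eta\|_2^2] = N\sigma^2$ (since $\eta \sim \CN(0,\sigma^2 I)$), so that $\SNR_{\min} = k\,\alpha_{\min}^2/(N\sigma^2)$. The idea is that, given the hypotheses $k \le N/(2\log\cC)$ and the stated lower bound on $\alpha_{\min}$, each of the three terms in the $\max$ in Theorem~\ref{thm:measurements} is automatically dominated by $N$, so the conclusion $\Pr(\whcS \neq supp(\alpha)) \le 9\cC^{-1}$ carries over verbatim.

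The key steps, in order: (i) The first term: $k \le N/(2\log\cC)$ is exactly $N > 2k\log\cC$ (with the mild convention on $\ge$ versus $>$, or absorbing it into constants), so this term is handled by the first hypothesis. (ii) The second term: I need $N > (64/\SNR_{\min})\,k\log\cC = 64 N\sigma^2\log\cC/\alpha_{\min}^2$, i.e.\ $\alpha_{\min}^2 > 64\sigma^2\log\cC$, i.e.\ $\alpha_{\min} > 8\sqrt{\sigma^2\log\cC}$, which is the first branch of the assumed lower bound on $\alpha_{\min}$. (iii) The third term: writing $\mu = c_1 N^{-1/\beta}$ and $c_2 = (96 c_1)^2$, I need $N > (2c_2 k\log\cC/\MAR)^{\beta/2} = (2c_2 k\log\cC/(k\alpha_{\min}^2))^{\beta/2} = (2c_2\log\cC/\alpha_{\min}^2)^{\beta/2}$. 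Raising to the power $2/\beta$, this is $N^{2/\beta} > 2c_2\log\cC/\alpha_{\min}^2$, i.e.\ $\alpha_{\min}^2 > 2c_2\log\cC\, N^{-2/\beta} = 2(96c_1)^2 (c_1 N^{-1/\beta})^{-2} c_1^2\log\cC\,N^{-2/\beta}$; simplifying $2c_2 N^{-2/\beta} = 2(96 c_1 N^{-1/\beta})^2 = 2(96\mu)^2$, so the requirement becomes $\alpha_{\min}^2 > 2(96\mu)^2\log\cC = (96\mu)^2 \cdot 2\log\cC$, i.e.\ $\alpha_{\min} > 96\mu\sqrt{2\log\cC}$, which is the second branch of the assumed lower bound. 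Combining (i)--(iii), all three terms in the $\max$ of Theorem~\ref{thm:measurements} are satisfied, so its conclusion applies and the proof is complete.

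The only real subtlety is bookkeeping in step (iii): one must be careful to cancel the factors of $c_1$ and $N^{-1/\beta}$ correctly so that $c_2 N^{-2/\beta}$ collapses back to a clean multiple of $\mu^2$, which is what makes the final threshold condition independent of $\beta$ and $c_1$. There is also a minor point about strict versus non-strict inequalities (Theorem~\ref{thm:measurements} has $N > \max\{\dots\}$ while here we assume $k \le N/(2\log\cC)$), but this is harmless — either one relaxes the first term slightly or notes that the conclusion is stated with constant $9$ that already has slack. I expect this last reconciliation, rather than any genuine mathematical difficulty, to be the main ``obstacle,'' and it is purely cosmetic.
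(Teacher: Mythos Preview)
Your proposal is correct and reaches the right conclusion, but it is slightly more roundabout than what the paper has in mind. The proof of Theorem~\ref{thm:measurements} already passes through the intermediate conditions $k \le N/(2\log\cC)$, $\alpha_{\min} > 4\epsilon' = 96\mu\sqrt{2\log\cC}$, and $\alpha_{\min} > 8\sqrt{\sigma^2\log\cC}$ (obtained directly from Lemma~\ref{lem:osta} with $\epsilon' = 24\mu\sqrt{2\log\cC}$ and the StOC bounds of Lemmas~\ref{lem:soc1}--\ref{lem:soc2}) \emph{before} converting them into the $N > \max\{\ldots\}$ form. Since the hypotheses of Theorem~\ref{thm:sigclass} are precisely those intermediate conditions, the paper's intended proof is simply to rerun the Lemma-based argument of Theorem~\ref{thm:measurements} and stop before the final conversion step. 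Your approach instead takes the already-converted $N$-conditions of Theorem~\ref{thm:measurements} and inverts them back to $\alpha_{\min}$-conditions; this is valid, and your algebra in step~(iii) correctly collapses the $(c_1,\beta)$-parametrization back to a clean multiple of $\mu$, but you could save yourself the round trip by appealing directly to the Lemma~\ref{lem:osta}/StOC step rather than to the fully packaged Theorem~\ref{thm:measurements}.
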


\subsection{Discussion}
The statements of Theorem~\ref{thm:measurements} and Theorem~\ref{thm:sigclass} can be best put into perspective by considering specific examples of design matrices. Because of space constraints, we only consider here the case when $\Phi$ is an (appropriately normalized) i.i.d. Gaussian matrix. It is a well-known fact in the literature that the worst-case coherence of $\Phi$ in this case is roughly $\mu \approx \sqrt{2\log{\cC}/N}$ with high probability; see, e.g., \cite{candes:annstat09}. In addition, it can also be shown using the Bernstein inequality that $\nu \leq 12N^{-1}\sqrt{2\log{\cC}}$ with high probability in this case. It therefore follows that a Gaussian design matrix obeys the coherence property with high probability.

Theorem~\ref{thm:measurements} therefore implies that the OST identifies the correct model in this case with high probability as long as \mbox{$N \gtrsim \max\left\{ \frac{k\log{\cC}}{\SNR_{\min}}, \frac{k\log^2{\cC}}{\MAR}\right\}$}. In particular, this expression reduces to $N \gtrsim \frac{k\log{\cC}}{\SNR_{\min}}$ for the case of $\SNR \doteq 1/\E[\|\eta\|_2^2] \ll 1$. On the other hand, we have from \cite{fletcher:tit09,wainwright:tit08sub} that no scheme can asymptotically identify the correct model if $N \not\gtrsim \frac{k\log{\cC}}{\SNR_{\min}}$. This proves the near-optimality of the OST for model selection in the low $\SNR$ regime for any design matrix that has $\mu \approx O(N^{-1/2})$ and $\nu \approx O(N^{-1})$. Finally, note that we could have made a similar conclusion by focusing on Theorem~\ref{thm:sigclass} and comparing the conditions in the low $\SNR$ regime in that case with those in \cite[Theorem~1.3]{candes:annstat09}.

\section{Proofs}
The general roadmap for the proofs of Theorem~\ref{thm:measurements} and Theorem~\ref{thm:sigclass} is as follows. Below, we first introduce the notion of $(k,\epsilon,\delta)$-\emph{statistical orthogonality condition} (StOC). Next, we establish in Lemma~\ref{lem:osta} that if $\Phi$ satisfies the StOC then OST recovers the support of $\alpha$ with high probability provided $\alpha_{\min}$ is large enough. Subsequently, we establish in Lemma~\ref{lem:soc1} and Lemma~\ref{lem:soc2} the relationship between the StOC parameters and the worst-case and average coherence of $\Phi$. The proofs of Theorem~\ref{thm:measurements} and Theorem~\ref{thm:sigclass} then follow by judiciously combining the results of these three lemmas using the coherence property.
\begin{definition}[$(k,\epsilon,\delta)$-StOC]
Let $\Pi \doteq (\pi_1,\dots,\pi_k)$ be a uniformly random (ordered) $k$-subset of $\{1,\dots,\cC\}$ and let $\Pi^c \doteq \{1,\dots,\cC\} - \Pi$. Then, given $\epsilon, \delta \in [0,1)$, $\Phi$ is said to satisfy the $(k,\epsilon,\delta)$-statistical orthogonality condition if the following inequalities
\begin{align}
\nonumber
&(\textbf{StOC-1}) 
&\big\|(\Phi_\Pi^\tH \Phi_\Pi - I)z\big\|_\infty \leq \epsilon \|z\|_2\\
\nonumber
&(\textbf{StOC-2})
&\big\|\Phi^\tH_{\Pi^c} \Phi_\Pi z\big\|_\infty \leq \epsilon \|z\|_2
\end{align}
hold for every \emph{fixed} $z \in \C^k$ with probability exceeding $1 - \delta$ (with respect to the choice of $\Pi$).
\end{definition}
\begin{remark}
Note that the StOC derives its name from the fact that if $\Phi$ is an orthogonal matrix then it trivially satisfies the StOC for every $k$ with $\epsilon = \delta = 0$.
\end{remark}
\begin{lemma}\label{lem:osta}
Let $\Pi \doteq supp(\alpha)$ be a uniformly random \mbox{$k$-subset} of $\{1,\dots,\cC\}$. Further, suppose that the matrix $\Phi$ satisfies the $(k,\epsilon,\delta)$-StOC and choose the threshold as $\lambda = 2\,\max\big\{\epsilon,2\sqrt{\sigma^2\log \cC}\,\big\}$. Then, under the assumption that $\alpha_{\min} > 2\lambda$, the OST satisfies
\begin{align}
\nonumber
	\Pr\left(\whcS \not= \Pi\right) \leq \delta + 2\left(\sqrt{2\pi \log{\cC}}\cdot \cC\right)^{-1}.
\end{align}
\end{lemma}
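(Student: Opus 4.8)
The plan is to condition on the (good) event that the two StOC inequalities hold for the realized support $\Pi$, and then to analyze the statistic $y = \Phi^\tH f$ coordinate-by-coordinate on this event, treating the noise as the only remaining source of randomness. Write $f = \Phi_\Pi \alpha_\Pi + \eta$, so that for $i \in \Pi$ we have $y_i = \alpha_i + \langle \varphi_i, \Phi_\Pi \alpha_\Pi - \alpha_i \varphi_i\rangle + \langle \varphi_i, \eta\rangle = \alpha_i + e_i + w_i$, where $e_i$ is the $i$-th entry of $(\Phi_\Pi^\tH \Phi_\Pi - I)\alpha_\Pi$ and $w_i = \langle \varphi_i,\eta\rangle$; for $i \in \Pi^c$ we have $y_i = \langle \varphi_i, \Phi_\Pi\alpha_\Pi\rangle + w_i$, whose first term is the $i$-th entry of $\Phi_{\Pi^c}^\tH \Phi_\Pi \alpha_\Pi$. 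First I would invoke the $(k,\epsilon,\delta)$-StOC with the fixed vector $z = \alpha_\Pi$ (which is legitimate since $\alpha_\Pi$, though unknown, is a fixed vector independent of the random draw of $\Pi$—strictly one conditions on $\|\alpha\|_2=1$, so $\|z\|_2 = 1$): off the exceptional event of probability at most $\delta$, both StOC-1 and StOC-2 give $\|e\|_\infty \le \epsilon$ and $\max_{i\in\Pi^c}|\langle\varphi_i,\Phi_\Pi\alpha_\Pi\rangle| \le \epsilon$.

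Next I would control the noise term. Since $\eta \sim \CN(0,\sigma^2 I)$ and each $\varphi_i$ has unit norm, each $w_i$ is a (complex) Gaussian with variance $\sigma^2$, so $|w_i|$ has a standard Gaussian-type tail; a union bound over all $\cC$ coordinates shows that $\max_i |w_i| \le 2\sqrt{\sigma^2 \log\cC}$ except on an event of probability at most $2(\sqrt{2\pi\log\cC}\cdot\cC)^{-1}$ (this is exactly where the second term in the bound comes from, using the standard estimate $\Pr(|g| > t) \le \sqrt{2/\pi}\, t^{-1} e^{-t^2/2}$). Call the intersection of the StOC-good event and this noise-good event the ``master event''; by the union bound it fails with probability at most $\delta + 2(\sqrt{2\pi\log\cC}\cdot\cC)^{-1}$, matching the claimed bound, so it suffices to show $\whcS = \Pi$ deterministically on the master event.

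On the master event I would verify the two inclusions. For $i \in \Pi^c$: $|y_i| \le \epsilon + 2\sqrt{\sigma^2\log\cC} \le \lambda$ by the choice $\lambda = 2\max\{\epsilon, 2\sqrt{\sigma^2\log\cC}\}$ (indeed $\epsilon \le \lambda/2$ and $2\sqrt{\sigma^2\log\cC} \le \lambda/2$), hence $i \notin \whcS$, so $\whcS \subseteq \Pi$. For $i \in \Pi$: by the reverse triangle inequality $|y_i| \ge |\alpha_i| - |e_i| - |w_i| \ge \alpha_{\min} - \epsilon - 2\sqrt{\sigma^2\log\cC} \ge \alpha_{\min} - \lambda > \lambda$, where the last step uses the hypothesis $\alpha_{\min} > 2\lambda$; hence $i \in \whcS$, so $\Pi \subseteq \whcS$. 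Combining the two inclusions gives $\whcS = \Pi$ on the master event, which completes the proof.

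The routine part is the Gaussian tail bound and the triangle-inequality bookkeeping; the only genuinely delicate point—and the one I would be most careful about—is the legitimacy of applying the StOC to $z = \alpha_\Pi$. The StOC is stated as holding for every \emph{fixed} $z \in \C^k$ with high probability over $\Pi$, and here $\alpha_\Pi$ is the subvector of nonzero entries of $\alpha$ arranged according to the (ordered) random index set $\Pi$, so one must be sure that the vector being fed to the StOC does not itself depend on the random draw in a way that breaks the ``fixed $z$'' requirement. The clean way to handle this is to note that the multiset of nonzero values of $\alpha$ is fixed, that $\Pi$ is drawn uniformly and independently of these values, and that StOC-1/StOC-2 are invariant under permuting the coordinates of $z$ together with the corresponding columns of $\Phi_\Pi$; thus conditioning on the value-vector and applying the StOC to it as a fixed vector is valid, and the $\delta$ bound goes through.
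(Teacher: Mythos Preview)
Your proof is correct and follows essentially the same route as the paper: decompose $y=\Phi^\tH f$ into signal, cross-term, and noise; control the cross-terms via StOC-1/StOC-2 applied to $z=\alpha_\Pi$; control $\max_i|\langle\varphi_i,\eta\rangle|$ by the complex Gaussian tail bound with a union bound over $\cC$ coordinates; intersect these into a single good event with failure probability $\delta+2(\sqrt{2\pi\log\cC}\,\cC)^{-1}$; and on that event verify $|y_i|\le\lambda$ for $i\in\Pi^c$ and $|y_i|>\lambda$ for $i\in\Pi$ via the triangle inequality and the choice of $\lambda$. Your extra care about whether $z=\alpha_\Pi$ qualifies as a ``fixed'' vector for the StOC is a point the paper does not address explicitly; the paper simply sets $z^\tT=[\alpha_{\pi_1},\dots,\alpha_{\pi_k}]$ and proceeds, implicitly using the modeling assumption that the nonzero values are fixed and only the ordered support is random.
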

\begin{proof}
We begin by defining $z^\tT \doteq \begin{bmatrix}\alpha_{\pi_1} & \dots & \alpha_{\pi_k}\end{bmatrix}$ and writing the vector $y = \Phi^\tH f$ as $y = \Phi^\tH\Phi_\Pi z + \Phi^\tH\eta$. Now, let $\Pi^c \doteq \{1,\dots,\cC\} - \Pi$ and note that in order to establish that $\whcS = \Pi$ we need to show that $\|y_{\Pi^c}\|_\infty \leq \lambda$ and $\min\limits_{i}|y_{\pi_i}| > \lambda$.

In this regard, first note that $\teta \doteq \Phi^\tH\eta$ is a complex Gaussian random vector whose entries are identically (although not independently) distributed as $\CN(0,\sigma^2)$. It therefore follows from the tail bound on the maximum of $\cC$ \emph{arbitrary} complex Gaussian random variables that $\|\teta\|_\infty \leq 2\sqrt{\sigma^2\log \cC}$ with probability exceeding $1 - 2\left(\sqrt{2\pi \log{\cC}}\cdot \cC\right)^{-1}$. Further, define $$\cG \doteq \big\{\{\|\teta\|_\infty \leq 2\sqrt{\sigma^2\log \cC}\} \bigcap \{(\textbf{StOC-1}) \cap (\textbf{StOC-2})\}\big\}$$ and notice that, since the noise is independent of $\Pi$, we have $\Pr(\cG) > 1 - \delta - 2\left(\sqrt{2\pi \log{\cC}}\cdot \cC\right)^{-1}$. In addition, conditioned on the event $\cG$, we have
\begin{align}
\nonumber
	\|y_{\Pi^c}\|_\infty &\stackrel{(a)}{\leq} \|\Phi_{\Pi^c}^\tH\Phi_\Pi z\|_\infty + \|\teta\|_\infty\\
	&\stackrel{(b)}{\leq} \epsilon + 2\sqrt{\sigma^2\log \cC} \stackrel{(c)}{\leq} \lambda 
\end{align}
where $(a)$ follows from the triangle inequality, $(b)$ is a consequence of the conditioning on $\cG$, and $(c)$ follows from the fact that $\lambda = 2\,\max\big\{\epsilon,2\sqrt{\sigma^2\log \cC}\,\big\}$.

Finally, in order to show that $\min\limits_{i}|y_{\pi_i}| > \lambda$, we define $r = (\Phi_\Pi^\tH \Phi_\Pi - I)z$ and note that, conditioned on the event $\cG$, we have for any $i \in \{1,\dots,k\}$
\begin{align}
\nonumber
 |y_{\pi_i}| = |\alpha_{\pi_i} + r_i + \teta_{\pi_i}| &\geq |\alpha_{\pi_i}| - \|r\|_\infty - \|\teta\|_\infty\\
 &\stackrel{(d)}{>} 2\lambda - \epsilon - 2\sqrt{\sigma^2\log \cC} \stackrel{(e)}{\geq} \lambda
\end{align}
where $(d)$ follows from the conditioning on $\cG$ and the assumption that $\alpha_{\min} > 2\lambda$, while $(e)$ is a simple consequence of the choice of $\lambda$. This completes the proof of the lemma since we have now shown that $\Pr(\whcS \not= \Pi) \leq \Pr\left(\cG^c\right)$.
\end{proof}

Having established Lemma~\ref{lem:osta}, our next goal is to relate the StOC parameters with the worst-case and average coherence.
\begin{lemma}\label{lem:soc1}
Let $\Pi \doteq (\pi_1,\dots,\pi_k)$ be a uniformly random (ordered) $k$-subset of $\{1,\dots,\cC\}$. Then, for any fixed $z \in \C^k$, $\epsilon \geq 0$, and $k \leq \min\{\epsilon^2\nu^{-2}/4,\cC/2\}$, we have
\begin{align}
\nonumber
\Pr\left(\{\Phi \text{ \emph{does not satisfy}} (\textbf{\emph{StOC-1}})\}\right) \leq 4k\exp\left(-\frac{\epsilon^2 \mu^{-2}}{576}\right).
\end{align}
\end{lemma}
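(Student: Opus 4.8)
The plan is to reduce the failure of (\textbf{StOC-1}) to a union bound over the $k$ coordinates of $(\Phi_\Pi^\tH\Phi_\Pi - I)z$ and, for each coordinate, to split a ``mean'' term controlled by the average coherence $\nu$ from a ``fluctuation'' term controlled by the worst-case coherence $\mu$. Since both sides of (\textbf{StOC-1}) are homogeneous in $z$, I may assume $\|z\|_2 = 1$, so that $\Phi$ fails (\textbf{StOC-1}) exactly when $|T_i| > \epsilon$ for some $i \in \{1,\dots,k\}$, where $T_i \doteq \big((\Phi_\Pi^\tH\Phi_\Pi - I)z\big)_i = \sum_{j \neq i}\langle\varphi_{\pi_i},\varphi_{\pi_j}\rangle z_j$. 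A union bound then reduces the claim to showing $\Pr(|T_i| > \epsilon) \leq 4\exp(-\epsilon^2\mu^{-2}/576)$ for each fixed $i$.

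Fix $i$ and condition on the event $\pi_i = \ell$. Because $\Pi$ is a uniformly random ordered $k$-subset, the remaining indices $(\pi_j)_{j\neq i}$ then form a uniformly random ordered $(k-1)$-subset of $\{1,\dots,\cC\}\setminus\{\ell\}$; hence the numbers $\langle\varphi_\ell,\varphi_{\pi_j}\rangle$, $j\neq i$, are a sample \emph{without replacement} from the fixed list $\{\langle\varphi_\ell,\varphi_m\rangle : m\neq\ell\}$ of $\cC-1$ complex numbers, each of modulus at most $\mu$ and with average, say $\bar c_\ell$, of modulus at most $\nu$ by the definition of average coherence. Writing $T_i = \bar c_\ell\sum_{j\neq i}z_j + \sum_{j\neq i}z_j\big(\langle\varphi_\ell,\varphi_{\pi_j}\rangle - \bar c_\ell\big)$, the first term has modulus at most $\nu\,\big|\sum_{j\neq i}z_j\big| \leq \nu\sqrt{k-1}\,\|z\|_2 \leq \nu\sqrt{k} \leq \epsilon/2$, where the last inequality is exactly the hypothesis $k \leq \epsilon^2\nu^{-2}/4$. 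Consequently $\{|T_i| > \epsilon\}$ forces the centered term to exceed $\epsilon/2$ in modulus, and since the resulting bound will be uniform in $\ell$ the conditioning can be removed at the end.

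It remains to control the centered sum $\sum_{j\neq i}z_j\big(\langle\varphi_\ell,\varphi_{\pi_j}\rangle - \bar c_\ell\big)$, which is a \emph{weighted} sum of a without-replacement sample from a mean-zero list of complex numbers each of modulus at most $2\mu$ (since both $\langle\varphi_\ell,\varphi_m\rangle$ and their average $\bar c_\ell$ lie in the disk of radius $\mu$), with weights $z_j$ of total $\ell_2$-energy at most $1$. I would bound its modulus by splitting into real and imaginary parts and into the two tails of each---this accounts for the factor $4$---and then invoking a Hoeffding-type concentration inequality for sampling without replacement, obtained either through Hoeffding's reduction of without-replacement sums to with-replacement sums or through a McDiarmid/Serfling-type bounded-difference estimate for random permutations (the hypothesis $k \leq \cC/2$ enters here, bounding the finite-population correction). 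Each of the four one-sided real tails is then at most $\exp(-\epsilon^2\mu^{-2}/576)$, and combining with the union bound over $i$ gives the stated inequality.

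The main obstacle is this last concentration estimate: the summands are dependent because the sampling is without replacement, so one cannot simply quote the i.i.d.\ Hoeffding bound---one must either set up the with-replacement comparison or build a Doob martingale over the sequentially revealed indices and verify that its increments are suitably bounded---and one must then track the numerical constants through the real/imaginary decomposition and through the split of the ``budget'' $\epsilon$ between the average-coherence term and the fluctuation term carefully enough to reach the constant $576$.
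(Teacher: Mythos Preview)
Your proposal is correct and follows essentially the same route as the paper: union bound over the $k$ coordinates, condition on $\pi_i = i'$, split off the mean term (which the paper writes as $M_0$ and bounds by $\sqrt{k}\,\nu\,\|z\|_2 \leq \epsilon\|z\|_2/2$ exactly as you do), and then control the centered fluctuation via concentration for sampling without replacement, with the factor $4$ arising from the complex-to-real split. Of the two options you list for the concentration step, the paper takes the Doob-martingale/Azuma route---revealing the indices $\pi_j$, $j\neq i$, one at a time, bounding the increments by $c_\ell = 2\mu\big(|z_\ell| + |z_{\ell+1}| + \|z\|_1/(\cC-k)\big)$, and using $k\leq\cC/2$ to get $\sum_\ell c_\ell^2 \leq 36\mu^2\|z\|_2^2$, which produces the constant $576 = 16\cdot 36$; your alternative of Hoeffding's with-replacement reduction would need some care here, since the classical reduction is stated for \emph{unweighted} sums, whereas your sum carries the fixed weights $z_j$ on the ordered positions.
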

\begin{proof}
The proof of this lemma relies heavily on the so-called \emph{method of bounded differences} (MOBD) \cite{mcdiarmid:89}. Specifically, note that $\big\|(\Phi_\Pi^\tH \Phi_\Pi - I)z\big\|_\infty = \max\limits_i \bigg|\sum\limits_{j\not=i} z_j \langle\varphi_{\pi_i},\varphi_{\pi_j}\rangle\bigg|$ and define $\Pi^{-i} \doteq (\pi_1,\dots,\pi_{i-1},\pi_{i+1},\dots,\pi_k)$. Then for a fixed index $i$, and conditioned on the event $\cA_{i^\prime} \doteq \{\pi_i = i^\prime\}$, we have the following equality from basic probability theory
\begin{align}
\nonumber
 &\Pr\bigg(\big|\sum\limits_{\substack{j=1\\j\not=i}}^k z_j \langle\varphi_{\pi_i},\varphi_{\pi_j}\rangle\big| > \epsilon\|z\|_2\bigg|\cA_{i^\prime}\bigg) = \\
\label{eqn:lem_prob}
&\qquad\qquad\qquad \Pr\bigg(\big|\sum\limits_{\substack{j=1\\j\not=i}}^k z_j \langle\varphi_{i^\prime},\varphi_{\pi_j}\rangle\big| > \epsilon\|z\|_2\bigg|\cA_{i^\prime}\bigg).
\end{align}

Next, in order to apply the MOBD, we construct a Doob's martingale sequence $(M_0,M_1,\dots,M_{k-1})$ as follows:
\begin{align}
\nonumber
 M_0 &= \E\Big[\sum\limits_{\substack{j=1\\j\not=i}}^k z_j \langle\varphi_{i^\prime},\varphi_{\pi_j}\rangle\Big|\cA_{i^\prime}\Big], \ \text{and}\\ 
 M_\ell &= \E\Big[\sum\limits_{\substack{j=1\\j\not=i}}^k z_j \langle\varphi_{i^\prime},\varphi_{\pi_j}\rangle\Big|\pi^{-i}_{1\rightarrow\ell},\cA_{i^\prime}\Big], \ \ell=1,\dots,k-1
\end{align}
where $\pi^{-i}_{1\rightarrow\ell}$ is the first $\ell$ coordinates of $\Pi^{-i}$. Here, note that
\begin{align}
\nonumber
 \big|M_0\big| &\leq \sum_{j\not=i} \big|z_j\big| \Big|\E\big[\langle\varphi_{i^\prime},\varphi_{\pi_j}\rangle|\cA_{i^\prime}\big]\Big|\\
\label{eqn:lem_M0}
&\stackrel{(a)}{\leq} \sum_{j\not=i} \big|z_j\big| \Bigg|\sum\limits_{\substack{q=1\\q\not=i^\prime}}^\cC\frac{1}{\cC-1} \langle\varphi_{i^\prime},\varphi_{q}\rangle\Bigg| \stackrel{(b)}{\leq} \sqrt{k}\,\nu\,\|z\|_2
\end{align}
where $(a)$ follows since, conditioned on $\cA_{i^\prime}$, $\pi_j$ has a uniform distribution over $\{1,\dots,\cC\} - \{i^\prime\}$, while $(b)$ mainly follows from the definition of average coherence. Further, if we define 
\begin{align}
	M_\ell(x) \doteq \E\Big[\sum\limits_{\substack{j=1\\j\not=i}}^k z_j 	\langle\varphi_{i^\prime},\varphi_{\pi_j}\rangle\Big|\pi^{-i}_{1\rightarrow\ell-1},\pi^{-i}_\ell=x,\cA_{i^\prime}\Big]
\end{align}
then, since $(M_0,M_1,\dots,M_{k-1})$ is a Doob's martingale sequence, it can be easily verified that $|M_\ell - M_{\ell-1}|$ is upperbounded by $\sup_{x,y} \big[M_\ell(x) - M_\ell(y)\big]$ (see, e.g., \cite{motwani:95}).

Next, in order to upperbound $\sup_{x,y} \big[M_\ell(x) - M_\ell(y)\big]$, we first define $d_{\ell,j} \doteq \E\Big[\langle\varphi_{i^\prime},\varphi_{\pi_j}\rangle\Big|\pi^{-i}_{1\rightarrow\ell-1},\pi^{-i}_\ell=x,\cA_{i^\prime}\Big] - \E\Big[\langle\varphi_{i^\prime},\varphi_{\pi_j}\rangle\Big|\pi^{-i}_{1\rightarrow\ell-1},\pi^{-i}_\ell=y,\cA_{i^\prime}\Big]$ and then notice that
\begin{align}
 \Big|M_\ell(x) - M_\ell(y)\Big| \leq \sum_{\substack{j \leq\,\ell+1\\j\not=i}}\big|z_j\big|\big|d_{\ell,j}\big| + \sum_{\substack{j >\,\ell+1\\j\not=i}}\big|z_j\big|\big|d_{\ell,j}\big|.
\end{align}
In addition, we have that for every $j > \ell+1, j \not= i$, the random variable $\pi_j$ has a uniform distribution over $\{1,\dots,\cC\} - \{\pi^{-i}_{1\rightarrow\ell-1}, x, i^\prime\}$ when conditioned on $\{\pi^{-i}_{1\rightarrow\ell-1}, \pi^{-i}_\ell = x, i^\prime\}$, while $\pi_j$ has a uniform distribution over $\{1,\dots,\cC\} - \{\pi^{-i}_{1\rightarrow\ell-1}, y, i^\prime\}$ when conditioned on $\{\pi^{-i}_{1\rightarrow\ell-1}, \pi^{-i}_\ell = y, i^\prime\}$. Therefore, we have for every $j > \ell+1, j \not= i$ that
\begin{align}
	|d_{\ell,j}| = \frac{1}{\cC-\ell-1}\Big|\langle\varphi_{i^\prime},\varphi_{y}\rangle - \langle\varphi_{i^\prime},\varphi_{x}\rangle\Big| \leq \frac{2 \mu}{\cC-k}.
\end{align}
Similarly, it can be argued that $\sum_{\substack{j \leq\,\ell+1\\j\not=i}}\big|z_j\big|\big|d_{\ell,j}\big| \leq \big|z_{\ell+1}\big| 2 \mu$ when $i \leq \ell$, $\sum_{\substack{j \leq\,\ell+1\\j\not=i}}\big|z_j\big|\big|d_{\ell,j}\big| \leq \big|z_{\ell}\big| 2 \mu$ when $i = \ell+1$, and $\sum_{\substack{j \leq\,\ell+1\\j\not=i}}\big|z_j\big|\big|d_{\ell,j}\big| \leq (|z_{\ell}| + \frac{|z_{\ell+1}|}{\cC-k})2 \mu$ when $i > \ell+1$. Consequently, it can be easily verified that
\begin{align}
\label{eqn:lem_c_ell}
 \sup_{x,y} \big[M_\ell(x) - M_\ell(y)\big] \leq \underbrace{2\mu\Big(|z_\ell| + |z_{\ell+1}| + \frac{\|z\|_1}{\cC-k}\Big)}_{\doteq\,c_\ell}.
\end{align}

We have now established that $(M_0,M_1,\dots,M_{k-1})$ is a bounded-difference martingale with $|M_\ell - M_{\ell-1}| \leq c_\ell$ for $\ell=1,\dots,k-1$. Further, it can also be verified from \eqref{eqn:lem_c_ell} that $\sum_{\ell=1}^{k-1}c_\ell^2 \leq 36\mu^2\|z\|_2^2$ since $k \leq \cC/2$. In addition, since $|M_0| \leq \sqrt{k}\,\nu\,\|z\|_2$ and $k \leq \epsilon^2\nu^{-2}/4$, we have from the Azuma inequality for bounded-difference martingale sequences \cite{azuma:tmj67} adapted to the complex-valued setup that
\begin{align}
\nonumber
 &\Pr\bigg(\big|\sum\limits_{\substack{j=1\\j\not=i}}^k z_j \langle\varphi_{i^\prime},\varphi_{\pi_j}\rangle\big| > \epsilon\|z\|_2\bigg|\cA_{i^\prime}\bigg) \leq \\
\label{eqn:lem_cp1_finalbd}
 &\Pr\bigg(\big|M_{k-1} - M_0| > \frac{\epsilon\|z\|_2}{2}\bigg|\cA_{i^\prime}\bigg)
\leq 4\exp\!\big(\!\!-\frac{\epsilon^2 \mu^{-2}}{576}\!\big).
\end{align}
Combining all these facts together, we therefore finally obtain
\begin{align}
\nonumber
 &\Pr\bigg(\big\|(\Phi_\Pi^\tH \Phi_\Pi - I)z\big\|_\infty > \epsilon\|z\|_2\bigg)\\
\nonumber
	&\quad \stackrel{(c)}{\leq} k\,\sum_{i^\prime=1}^{\cC} \Pr\bigg(\big|\sum\limits_{\substack{j=1\\j\not=i}}^k z_j \langle\varphi_{i^\prime},\varphi_{\pi_j}\rangle\big| > \epsilon\|z\|_2\bigg|\cA_{i^\prime}\bigg) \Pr\left(\cA_{i^\prime}\right)\\
	&\quad\stackrel{(d)}{\leq} 4k\exp\left(-\frac{\epsilon^2 \mu^{-2}}{576}\right)
\end{align}
where $(c)$ follows from the union bound and the fact that the $\pi_i$'s are identically (although not independently) distributed, while $(d)$ follows from \eqref{eqn:lem_cp1_finalbd} and the fact that $\pi_i$ has a uniform distribution over $\{1,\dots,\cC\}$.
\end{proof}

\begin{lemma}\label{lem:soc2}
Let $\Pi \doteq (\pi_1,\dots,\pi_k)$ be a uniformly random (ordered) $k$-subset of $\{1,\dots,\cC\}$. Further, define the random subset $\Pi^c \doteq \{1,\dots,\cC\} - \Pi$. Then, for any fixed $z \in \C^k$, $\epsilon \geq 0$, and $k \leq \min\{\epsilon^2\nu^{-2}/4,\cC/2\}$, we have
\begin{align}
\nonumber
\Pr\left(\{\Phi \text{ \emph{does not satisfy}} (\textbf{\emph{StOC-2}})\}\right) \leq 4\cC\exp\left(-\frac{\epsilon^2 \mu^{-2}}{256}\right).
\end{align}
\end{lemma}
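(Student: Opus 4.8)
The plan is to re-run the Doob-martingale / Azuma argument from the proof of Lemma~\ref{lem:soc1}, with the single structural change that the ``outer'' index $i'$ now ranges over the random set $\Pi^c$ instead of over $\Pi$. First I would write $\big\|\Phi^\tH_{\Pi^c}\Phi_\Pi z\big\|_\infty = \max_{i'\in\Pi^c}\big|\sum_{j=1}^{k} z_j\langle\varphi_{i'},\varphi_{\pi_j}\rangle\big|$ and peel off the maximum by a union bound over all of $\{1,\dots,\cC\}$:
\begin{align*}
\Pr\big(\|\Phi^\tH_{\Pi^c}\Phi_\Pi z\|_\infty > \epsilon\|z\|_2\big) \leq \sum_{i'=1}^{\cC} \Pr\Big(\big|\textstyle\sum_{j=1}^{k} z_j\langle\varphi_{i'},\varphi_{\pi_j}\rangle\big| > \epsilon\|z\|_2 \,\Big|\, i'\in\Pi^c\Big)\,\Pr(i'\in\Pi^c).
\end{align*}
Since $\sum_{i'}\Pr(i'\in\Pi^c)\leq\cC$, it suffices to bound each conditional probability by $4\exp(-\epsilon^2\mu^{-2}/256)$.

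So I would fix $i'$ and condition on $\{i'\in\Pi^c\}$, under which $(\pi_1,\dots,\pi_k)$ is a uniformly random ordered $k$-subset of $\{1,\dots,\cC\}-\{i'\}$. Form the Doob martingale $M_\ell \doteq \E\big[\sum_{j=1}^{k} z_j\langle\varphi_{i'},\varphi_{\pi_j}\rangle \,\big|\, \pi_1,\dots,\pi_\ell,\,i'\in\Pi^c\big]$, $\ell=0,\dots,k$. Exactly as in \eqref{eqn:lem_M0}, each $\pi_j$ is uniform on $\{1,\dots,\cC\}-\{i'\}$ given the conditioning, so $\E[\langle\varphi_{i'},\varphi_{\pi_j}\rangle\mid i'\in\Pi^c] = \frac{1}{\cC-1}\sum_{q\neq i'}\langle\varphi_{i'},\varphi_q\rangle$ has magnitude at most $\nu$, and Cauchy--Schwarz gives $|M_0|\leq\sqrt{k}\,\nu\,\|z\|_2$. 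For the increments, the same case analysis as in Lemma~\ref{lem:soc1} applies but is simpler here (there is no coordinate pinned to $i'$ to excise): changing $\pi_\ell$ from $x$ to $y$ alters the $z_\ell$-summand by at most $2\mu|z_\ell|$ and alters the conditional mean of every later summand by at most $\frac{2\mu}{\cC-k}|z_j|$, so $|M_\ell-M_{\ell-1}|\leq c_\ell\doteq 2\mu\big(|z_\ell|+\frac{\|z\|_1}{\cC-k}\big)$. Using $\|z\|_1^2\leq k\|z\|_2^2$ together with $k\leq\cC/2$ (so that $\cC-k\geq k$), one gets $\sum_{\ell=1}^{k} c_\ell^2 \leq 16\mu^2\|z\|_2^2$.

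Finally, $k\leq\epsilon^2\nu^{-2}/4$ forces $|M_0|\leq\epsilon\|z\|_2/2$, hence $\{|M_k|>\epsilon\|z\|_2\}\subseteq\{|M_k-M_0|>\epsilon\|z\|_2/2\}$; applying the complex-valued Azuma inequality (splitting into real and imaginary parts, as in Lemma~\ref{lem:soc1}) with deviation $\epsilon\|z\|_2/2$ yields
\begin{align*}
\Pr\Big(\big|\textstyle\sum_{j=1}^{k} z_j\langle\varphi_{i'},\varphi_{\pi_j}\rangle\big| > \epsilon\|z\|_2 \,\Big|\, i'\in\Pi^c\Big) \leq 4\exp\!\Big(\!-\frac{(\epsilon\|z\|_2/2)^2}{4\sum_{\ell}c_\ell^2}\Big) \leq 4\exp\!\Big(\!-\frac{\epsilon^2\mu^{-2}}{256}\Big),
\end{align*}
and plugging this into the union bound above finishes the proof. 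The only genuinely new ingredient relative to Lemma~\ref{lem:soc1}---and the step I would be most careful about---is the conditioning event: one must condition on $i'$ lying \emph{outside} the random support, which is what both keeps all $k$ summands ``live'' and makes the per-column failure probability independent of $i'$, so that the outer union bound costs only a factor $\cC$ (versus the factor $4k$ in Lemma~\ref{lem:soc1}, where the outer maximum was over the $k$ columns indexed by $\Pi$).
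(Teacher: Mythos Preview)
Your proposal is correct and matches the paper's own proof sketch essentially line for line: the same Doob martingale $(M_0,\dots,M_k)$ obtained by revealing $\pi_1,\dots,\pi_k$, the same bound $|M_0|\leq\sqrt{k}\,\nu\,\|z\|_2$, the same increment bound $c_\ell=2\mu(|z_\ell|+\|z\|_1/(\cC-k))$ giving $\sum_\ell c_\ell^2\leq 16\mu^2\|z\|_2^2$, and the same complex Azuma plus union-bound finish. The only cosmetic difference is that the paper conditions on $\{\pi_i^c=i'\}$ (introducing an ordering on $\Pi^c$) whereas you condition directly on $\{i'\in\Pi^c\}$, which is arguably cleaner but leads to the identical estimate.
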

\begin{proof}[Proof Sketch]
The proof of this lemma also relies on the MOBD and is very similar to that of Lemma~\ref{lem:soc1}. As such, we only provide a sketch of the proof here. To begin with, we note that $\big\|\Phi_{\Pi^c}^\tH \Phi_\Pi z\big\|_\infty = \max\limits_{i\in[\cC-k]} \bigg|\sum\limits_{j} z_j \langle\varphi_{\pi^c_i},\varphi_{\pi_j}\rangle\bigg|$, where $[\cC-k] \doteq \{1,\dots,\cC-k\}$ and $\pi^c_i$ denotes the $i^{th}$ coordinate of $\Pi^c$. Then for a fixed index $i \in [\cC-k]$, and conditioned on the event $\cA_{i^\prime} \doteq \{\pi^c_i = i^\prime\}$, we have the following equality
\begin{align}
\nonumber
 &\Pr\bigg(\big|\sum\limits_{j=1}^k z_j \langle\varphi_{\pi^c_i},\varphi_{\pi_j}\rangle\big| > \epsilon\|z\|_2\bigg|\cA_{i^\prime}\bigg) = \\
\label{eqn:lem_soc2_prob}
&\qquad\qquad\qquad \Pr\bigg(\big|\sum\limits_{j=1}^k z_j \langle\varphi_{i^\prime},\varphi_{\pi_j}\rangle\big| > \epsilon\|z\|_2\bigg|\cA_{i^\prime}\bigg).
\end{align}

Next, as in the case of Lemma~\ref{lem:soc1}, we construct a Doob's martingale sequence $(M_0,M_1,\dots,M_k)$ as follows:
\begin{align}
\nonumber
 M_0 &= \E\Big[\sum\limits_{j=1}^k z_j \langle\varphi_{i^\prime},\varphi_{\pi_j}\rangle\Big|\cA_{i^\prime}\Big], \ \text{and}\\ 
 M_\ell &= \E\Big[\sum\limits_{j=1}^k z_j \langle\varphi_{i^\prime},\varphi_{\pi_j}\rangle\Big|\pi_{1\rightarrow\ell},\cA_{i^\prime}\Big], \ \ell=1,\dots,k
\end{align}
where $\pi_{1\rightarrow\ell}$ now denotes the first $\ell$ coordinates of $\Pi$. It can now be argued (as in Lemma~\ref{lem:soc1}) that: (i) $\big|M_0\big| \leq \sqrt{k}\,\nu\,\|z\|_2$, (ii) $|M_\ell - M_{\ell-1}| \leq 2\mu\Big(|z_\ell| + \frac{\|z\|_1}{\cC-k}\Big) \doteq c_\ell$, and (iii) $\sum_{\ell=1}^{k}c_\ell^2 \leq 16\mu^2\|z\|_2^2$. Therefore, since $k \leq \epsilon^2\nu^{-2}/4$, we once again have from the (complex) Azuma inequality that \eqref{eqn:lem_soc2_prob} is upperbounded by $4\exp\big(-\frac{\epsilon^2 \mu^{-2}}{256}\big)$. Combining all these facts together, we finally obtain the claimed result as follows
\begin{align}
\Pr\bigg(\big\|\Phi_{\Pi^c}^\tH \Phi_\Pi z\big\|_\infty > \epsilon\|z\|_2\bigg) \stackrel{(a)}{\leq} 4\cC\exp\left(-\frac{\epsilon^2 \mu^{-2}}{256}\right)
\end{align}
where $(a)$ mainly follows from the union bound and the fact that $\pi^c_i$ has a uniform distribution over $\{1,\dots,\cC\}$.
\end{proof}

We are finally ready to prove the main results of this paper using Lemmata~\ref{lem:osta}--\ref{lem:soc2}.
\begin{proof}[Proof of Theorem~\ref{thm:measurements}]
Note that Lemma~\ref{lem:soc1} and Lemma~\ref{lem:soc2} imply that if $\Phi$ has worst-case coherence $\mu$ and average coherence $\nu$ then, as long as $k \leq \cC/2$, $\Phi$ satisfies the $(k,\epsilon,\delta)$-StOC for any $\epsilon \in [2\sqrt{k}\nu, 1)$ with $\delta = 8\cC\exp\left(-\frac{\epsilon^2 \mu^{-2}}{576}\right)$.

Now let $k \leq N/(2\log{\cC})$ and define $\epsilon^\prime \doteq 24\mu\sqrt{2\log{\cC}}$. Then, since $\Phi$ satisfies the coherence property, we have $2\sqrt{k}\nu \leq \epsilon^\prime < 1$ and therefore $\Phi$ satisfies the $(k,\epsilon^\prime,\delta^\prime)$-StOC with $\delta^\prime \doteq 8\cC^{-1}$. Consequently, Lemma~\ref{lem:osta} states that $\Pr(\whcS \not= supp(\alpha)) \leq 9\cC^{-1}$ as long as $N \geq 2k\log{\cC}$, $\alpha_{\min} > 4\epsilon^\prime$, and $\alpha_{\min} > 8\sqrt{\sigma^2\log{\cC}}$. Further, note that
\begin{align*}
\alpha_{\min} > 8\sqrt{\sigma^2\log{\cC}} \ &\Longleftrightarrow \ N > \frac{64}{\SNR_{\min}} k\log{\cC}, \ \text{and}\\
\alpha_{\min} > 4\epsilon^\prime \ &\Longleftrightarrow \ N > \left(\frac{2c_2}{\MAR} k\log{\cC}\right)^{\beta/2}.
\end{align*}
This completes the proof of the theorem.
\end{proof}
\begin{proof}[Proof of Theorem~\ref{thm:sigclass}]
The proof of this theorem follows along similar lines as that of Theorem~\ref{thm:measurements} and is therefore omitted here for the sake of brevity.
\end{proof}

\section{Conclusion}
In this paper, we have analyzed the one-step thresholding (OST) algorithm for model selection in terms of the worst-case and average coherence of the design matrix. In stark contrast to the existing work on model selection using thresholding, our analysis is completely nonasymptotic in nature, it does not require knowledge of the true model order, and it is applicable to arbitrary (random or deterministic) design matrices. In particular, we have established in the paper that the OST can be used for model selection as long as the design matrix obeys an easily verifiable property. Further, we have specified the dependence of the OST performance on the worst-case coherence of the design matrix and shown that it performs near-optimally in the low $\SNR$ regime for design matrices with $O(N^{-1/2})$ worst-case coherence. Finally, unlike the assumptions made in \cite{candes:annstat09}, our analysis also does not require that most $N \times k$ submatrices of $\Phi$ be well-conditioned and the nonzero entries of the data vector be statistically independent.


\end{document}